\documentclass[10pt,twocolumn]{article}

\usepackage[left=20mm,right=20mm,top=20mm,bottom=25mm,a4paper]{geometry}
\usepackage[utf8]{inputenc}

\usepackage{amssymb}
\usepackage{amsmath}
\usepackage{amsthm}
\usepackage{nicefrac}
\usepackage{bbm}
\theoremstyle{definition}   
\usepackage{chngcntr}
\usepackage{apptools}
\usepackage{relsize}
\usepackage{multirow}
\usepackage{mathtools}

\usepackage[usenames,dvipsnames,svgnames]{xcolor}
\usepackage{tikz}
\usepackage{pgfplots}
\usepgfplotslibrary{groupplots}
\usepackage{pgfplotstable}
\usepackage{subcaption}
\usepackage[font=small]{caption}
\usepackage{wrapfig}
\pgfplotsset{compat=newest}
\usepgfplotslibrary{fillbetween}
\usetikzlibrary{patterns}
\usepgfplotslibrary{colormaps}
\usetikzlibrary{backgrounds}

\usetikzlibrary{shapes,positioning,arrows.meta} 

\tikzset{>=latex} 

\usepackage{enumitem}

\usepackage[maxbibnames=50, maxcitenames=2, uniquelist=false,
    natbib=true, style=authoryear-comp,
    isbn=false, url=false,giveninits=true,dashed=false]{biblatex}   
\bibliography{bibliography}
\AtBeginBibliography{} 

\usepackage[main=english]{babel}
\usepackage[autostyle]{csquotes}
\usepackage[T1]{fontenc}
\usepackage[hidelinks]{hyperref}
\usepackage[affil-it]{authblk}
\usepackage[capitalise]{cleveref}
\crefname{section}{Sect.}{Sect.}
\crefname{appendix}{App.}{App.}

\usepackage{booktabs}

\newcommand{\psiN}{\psi_0}
\newcommand{\psiNnn}{\psi_0^{\text{NN}}}
\newcommand{\psiNnnt}{\widetilde{\psi}{^0_{\text{NN}}}}
\newcommand{\psiM}{\psi^{\text{max}}_0}
\newcommand{\Dpsi}{\Delta\psi}

\newcommand{\IoneISO}{\Bar{K}_1}
\newcommand{\ItwoISO}{\Bar{K}_2}

\newcommand{\SP}{\mathcal{SP}}
\newcommand{\SM}{\mathcal{SM}}
\newcommand{\TANH}{\mathcal{TH}}

\newcommand{\cof}{\operatorname{cof}}

\newcommand{\Cross}{\mathbin{\tikz [x=1.4ex,y=1.4ex,line width=.25ex] \draw (0.1,0.1) -- (0.9,0.9) (0.1,0.9) -- (0.9,0.1);}}
\newcommand{\norm}[1]{\left\lVert#1\right\rVert}

\newcommand{\ba}{\boldsymbol{a}}
\newcommand{\bb}{\boldsymbol{b}}
\newcommand{\bs}{{s}}
\newcommand{\bn}{\boldsymbol{n}}
\newcommand{\bu}{\boldsymbol{u}}
\newcommand{\bw}{\boldsymbol{w}}
\newcommand{\bx}{\boldsymbol{x}}

\newcommand{\bF}{\boldsymbol{F}}
\newcommand{\bH}{\boldsymbol{H}}
\newcommand{\bI}{\boldsymbol{I}}
\newcommand{\bP}{\boldsymbol{P}}
\newcommand{\bW}{\boldsymbol{W}}

\newcommand{\bbA}{{\mathbb{A}}}
\newcommand{\bbR}{{\mathbb{R}}}

\newcommand{\deriv}{\mathrm{d}}
\newcommand{\derivF}{\deriv_{\bF}}
\newcommand{\derivFF}{\deriv^2_{\bF\bF}}

\newcommand{\bbF}{\Bar{\bF}}

\newcommand{\ab}{(\ba\otimes\bb)}

\newcommand{\bPeq}{\bP^{\text{eq}}}
\newcommand{\bFeq}{\bF^{\text{eq}}}
\newcommand{\bHeq}{\bH^{\text{eq}}}
\newcommand{\bxeq}{\bx^{\text{eq}}}
\newcommand{\bueq}{\bu^{\text{eq}}}
\newcommand{\peq}{p^{\text{eq}}}
\newcommand{\bxn}{\bx_0}
\newcommand{\bbAeq}{\bbA^{\text{eq}}}

\newcommand{\bnull}{\boldsymbol{0}}

\newcommand{\cD}{{\mathcal{D}}}

\newcommand{\cF}{{\mathcal{F}}}

\tikzset{
    ncbar angle/.initial=90,
    ncbar/.style={
        to path=(\tikztostart)
        -- ($(\tikztostart)!#1!\pgfkeysvalueof{/tikz/ncbar angle}:(\tikztotarget)$)
        -- ($(\tikztotarget)!($(\tikztostart)!#1!\pgfkeysvalueof{/tikz/ncbar angle}:(\tikztotarget)$)!\pgfkeysvalueof{/tikz/ncbar angle}:(\tikztostart)$)
        -- (\tikztotarget)
    },
    ncbar/.default=0.5cm,
}

\tikzset{square left brace/.style={ncbar=0.25cm}}
\tikzset{square right brace/.style={ncbar=-0.25cm}}

\definecolor{CPSgreen}{RGB}{22,164,138}
\definecolor{CPSlightblue}{RGB}{104,143,198}
\definecolor{CPSdarkblue}{RGB}{67,83,132}
\definecolor{CPSgrey}{RGB}{204, 204, 204}
\definecolor{CPSdarkgrey}{RGB}{175, 175, 175}
\definecolor{CPSorange}{RGB}{246,163,21}
\definecolor{CPSred}{RGB}{194,76,76}

\newtheorem{definition}{Definition}[section]
\newtheorem{theorem}[definition]{Theorem}
\newtheorem{corollary}[definition]{Corollary}
\newtheorem{remark}[definition]{Remark}

\title{Constrained neural softening functions for the Mullins effect in parameter-dependent materials}

\author[*]{Dominik~K.~Klein}
\author[ ]{Jonas~Fey}
\author[ ]{Maximilian~Kannapinn}
\author[ ]{Oliver~Weeger}

\affil[ ]{\footnotesize Cyber-Physical Simulation, 
Department of Mechanical Engineering, TU Darmstadt, 64293 Darmstadt, Germany}

\affil[*]{\footnotesize Corresponding author, email: klein@cps.tu-darmstadt.de}

\date{\today}

\begin{document}

\twocolumn[
  \begin{@twocolumnfalse}
    \maketitle
    \setlength{\leftmargini}{1em}
    \vspace{-4ex}
\par\noindent\rule{\linewidth}{0.4pt}

\begin{abstract}
\normalsize
\noindent
Accurately capturing the Mullins effect across varying material compositions remains challenging, as constitutive models must unify flexible representations of stress softening with physical consistency and robust parameter calibration. In this work, we investigate parameter-dependent physics-augmented neural network constitutive models within the pseudo-elastic modeling framework, focusing on structural constraints that ensure physically admissible neural softening functions. \\
This paper has three \textbf{key contributions}: \textbf{(1)} We formulate different neural softening functions within the pseudo-elastic framework, including a multiplicative decomposition of the softening function that enables parameter-dependent ansatz functions satisfying common mechanical constitutive conditions by construction. \textbf{(2)} We derive a sufficient condition for ellipticity of pseudo-elastic constitutive models, which can be enforced in the constitutive models by construction. \textbf{(3)} We systematically investigate how flexibility and structural constraints influence the performance of neural softening functions in terms of prediction accuracy and calibration robustness. For this, we consider multiaxial experimental data of carbon black-filled EPDM rubber synthesized with varying amounts of curatives. The findings of our work suggest that neural softening functions are a promising extension of the classical pseudo-elastic framework and, when subject to suitable constraints, can improve predictive accuracy for highly nonlinear, parameter-dependent materials.

\end{abstract}
\vspace*{2ex}
\noindent
{\textbf{Keywords:} Mullins effect, stress softening, para\-meter-dependent material families, physics-augmented neural networks, constitutive modeling, ellipticity, EPDM polymer}

\par\noindent\rule{\linewidth}{0.4pt}
  \end{@twocolumnfalse}
]


\section{Introduction}

\textbf{Neural constitutive models} have emerged in recent years as a promising approach to representing complex material behaviors. There is broad agreement that such models should incorporate mechanical constitutive conditions to provide mechanically reasonable predictions, while exploiting the flexibility of neural networks.
This is referred to as thermodynamics-based \parencite{masi2021}, mechanics-informed \parencite{ASAD2023116463}, physics-based \parencite{aldakheel2025}, physics-constrained \parencite{kalina2022b}, physics-augmented \parencite{klein2022b}, or constitutive artificial \parencite{Linka2020} neural network (NN) modeling. Throughout this work, we adopt the terminology \emph{physics-augmented neural networks} (PANNs). Beyond, the paradigm of combining machine learning methods with physical and mathematical knowledge is widely used in many scientific fields \parencite{rueden2021,peng2021,karniadakis2021,Kumar2022,kannapinn,watson2025}. 

As in classical approaches \parencite{TruesdellNoll,neff2015}, an important question is which constraints should be imposed on the constitutive models and how these affect the resulting model behavior. Considerable progress has been made in hyperelastic PANNs \parencite{fuhg2024}, and various approaches have been proposed for inelasticity \parencite{Rosenkranz2024,boes2025accountingplasticityextensioninelastic,FRIEDRICHS2026118852} and multiphysics \parencite{klein2022b,kalina2024a,HOLTHUSEN2026119082}. Moreover, neural constitutive models that incorporate additional parameters that represent families of material behaviors, e.g., the geometry of microstructured materials, material constituents, or manufacturing conditions, have also been established \parencite{Linka2020,fernandezMaterialModelingParametric2022,ABDOLAZIZI2025106212,Klein2026a}. In the present work, we focus on pseudo-elastic constitutive modeling of the Mullins effect and investigate suitable constraints on neural softening functions.

\medskip

The \textbf{Mullins effect} describes stress softening under mechanical loading \parencite{mullins1948}. Thereby, softening mainly develops upon loading to previously unattained deformation levels, although additional softening may occur during subsequent loading cycles even without exceeding the previously attained maximum deformation \parencite{Plagge_Klüppel_2017}. The Mullins effect exhibits further complexities such as temperature dependence \parencite{Plagge_Klüppel_2017}, recovery over time \parencite{LIAO2023104338}, softening-induced anisotropy \parencite{DENORA2024108399}, and it is frequently accompanied by other inelastic phenomena such as viscoelasticity \parencite{ALKHOURY2024105650}. The Mullins effect can be observed in a variety of soft materials, including rubber-like materials \parencite{harwood1966,mullins1948} and soft biological tissue \parencite{FRANCESCHINI20062592,KUMAR2025104406}. Moreover, it is also observed across families of materials with varying compositions or manufacturing conditions \parencite{LIAO2020103366,Plagge_Klüppel_2017}. 

Although being most pronounced during initial loading, the consideration of the Mullins effect is important for reliable constitutive model predictions. Mechanical preconditioning, i.e., trying to remove the influence of the Mullins effect before experimentally characterizing a material specimen, can significantly affect the predictive capabilities of constitutive models \parencite{WOLLNER2026106698}. Moreover, heterogeneous deformation states in engineering components can lead to spatially heterogeneous softening, even for initially homogeneous materials \parencite[Fig.~7.6]{freund2012}. 

\medskip

A variety of \textbf{classical constitutive models for the Mullins effect} have been proposed, including both micromechanically motivated models \parencite{Plagge_Klüppel_2017,GOKTEPE20052259} and phenomenological approaches, such as continuum damage \parencite{SIMO1987153,miehe1995} or pseudo-elastic formulations \parencite{DORFMANN20041855,ogden_roxburgh}, cf.~\parencite{ricker2021}. Depending on the considered phenomena, these models describe isotropic softening \parencite{ogden_roxburgh} or anisotropic softening \parencite{itskov2010,DIANI2006483}, and may additionally incorporate further effects such as recovery \parencite{WANG201888}, temperature dependence \parencite{PLAGGE2020103291}, or viscoelasticity \parencite{CARO2026105818}. Moreover, models that include parametric dependencies of the material behavior have also been introduced \parencite{LIAO2020103366}. 

\medskip

\textbf{Neural constitutive modeling for the Mullins effect} has received only limited attention. To the best of our knowledge, the first neural constitutive model for the Mullins effect was proposed by \textcite{Ghaderi_Morovati_Dargazany_2020}. However, the thermodynamic consistency argument employed therein, which is based on monotonicity of the NNs, is not satisfied by the employed non-monotonic sine activation functions, cf.~App.~B.2 and~Sec.~4 therein. Neural constitutive models based on continuum damage formulations were proposed by~\textcite{zlatic2024} and~\textcite{TAC2024102220}, where the former employ feed-forward neural networks and the latter employ neural ordinary differential equations. In both cases, the stress response is scaled by a damage function such that softening also occurs along primary loading paths. Moreover, in~\textcite{OSTADRAHIMI2026104582}, a neural constitutive model that accounts for viscoelasticity, temperature dependency, and softening behavior is introduced, however, thermodynamic consistency is only approximately fulfilled by promoting it through the loss function. Closely related, pseudo-elastic constitutive models based on deep symbolic regression were introduced in~\textcite{ABDUSALAMOV2024104037}. Overall, neural constitutive models for the Mullins effect are still scarce, thermodynamic consistency is not always ensured by construction, and which constraints should be imposed on the neural softening or damage functions, as well as how these affect the resulting model behavior, has not yet been systematically investigated.

\medskip

In the \textbf{present work}, we focus on the classical pseudo-elastic modeling framework proposed by~\textcite{ogden_roxburgh},\footnote{For discussions on the thermodynamics of pseudo-elasticity, the reader is referred to \textcite{DESOUZANETO19941533} and \textcite{Naumann2015}.} which consists of two components: an energy potential describing the unsoftened material behavior, whose gradient determines the unsoftened stress response, and a softening function that scales this stress response to account for the Mullins effect. We consider isotropic materials that remain isotropic under softening and neglect additional effects such as recovery. Furthermore, we consider parameter-dependent material behavior, i.e., families of materials characterized by varying material constituents, microstructures, or manufacturing parameters. 

The first component of the pseudo-elastic framework, i.e., the energy potential, is well-established in neural constitutive modeling \parencite{linden2023,Geuken2026,Linka2023}, including parametrized formulations \parencite{Klein2026a,Linka2020}. In contrast, to the best of our knowledge, neural softening functions within the pseudo-elastic framework have not yet been proposed. In this work, we formulate and evaluate different neural softening functions within the pseudo-elastic framework. In particular, we introduce a multiplicative decomposition of the softening function that enables parameter-dependent neural ansatz functions satisfying common mechanical constitutive conditions by construction. We systematically investigate how flexibility and structural constraints influence the performance of neural softening functions in terms of prediction accuracy and calibration robustness. To this end, we consider multiaxial experimental data of carbon black-filled EPDM rubber synthesized with varying amounts of curatives by \textcite{Plagge_Klüppel_2017}.

\medskip

Moreover, to the best of our knowledge, \textbf{ellipticity of pseudo-elastic constitutive models} has not yet explicitly been investigated. Ellipticity is associated with the propagation of waves in a solid body and frequently linked to a stable behavior of numerical simulations \parencite{Schroeder_Neff_Balzani_2005,Zee1983}. Closely related, \textcite[Rem.~1]{Naumann2015} noted that, under certain conditions on the softening function and in some loading scenarios, a pseudo-elastic model inherits the polyconvexity of the underlying energy potential. However, the existence theorems associated with polyconvexity \parencite{Ball1976,Ball1977} are formulated for hyperelasticity and do not immediately carry over to pseudo-elastic models. Still, to anticipate~App.~\ref{sec:ellip}, this observation by Naumann~and~Ihlemann already entails ellipticity of pseudo-elastic constitutive models for \emph{some} loading regimes. Stability of pseudo-elastic models under uniaxial tension has been discussed by \textcite[Eq.~(3.18)]{ogden_roxburgh}. In the present work, we derive a sufficient condition for ellipticity of pseudo-elastic constitutive models, accounting for the different tangent operators associated with primary loading and unloading or reloading. We translate this condition into a monotonicity constraint on the neural softening function, which can be imposed by construction.

\medskip

The outline of the remaining manuscript is as follows. In~\cref{sec:basics}, we introduce the fundamentals of parameterized pseudo-elasticity. We then discuss the formulation of neural softening functions in~\cref{sec:PANN}, which are applied to the experimental data in~\cref{sec:application}. After the conclusion in~\cref{sec:conc}, we present the derivation of the ellipticity condition for pseudo-elastic Mullins modeling in~App.~\ref{sec:ellip}.

\section{Fundamentals of parameterized pseudo-elasticity}\label{sec:basics}

\begin{figure*}[t]
\centering
    \includegraphics[width=0.8\textwidth]{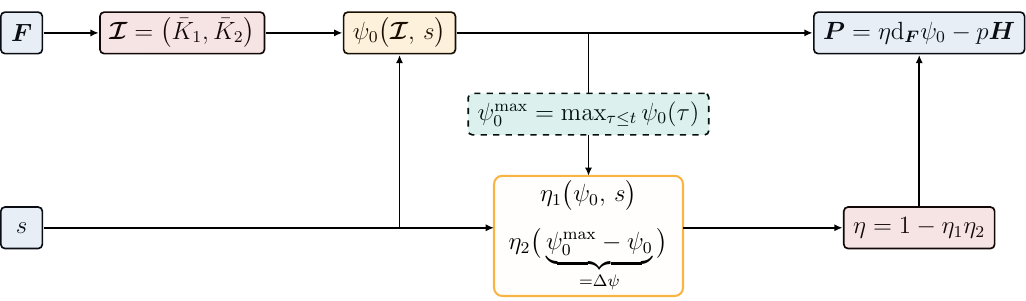}
\caption{Overview of the constitutive modeling framework. In this work, the energy potential $\psiN$ and the softening function composed of $\eta_1,\,\eta_2$ are represented by neural networks.}
\label{fig:model_overview}
\end{figure*}

We now introduce the fundamentals of pseudo-elastic Mullins softening, following the framework of \textcite{ogden_roxburgh}. For discussions on the thermodynamics of pseudo-elasticity, the reader is referred to \textcite{DESOUZANETO19941533} and \textcite{Naumann2015}. 

Regarding the {kinematics}, $\bF\in\text{GL}^+(3)$ denotes the deformation gradient, $\bH=\cof\bF\in\text{GL}^+(3)$ its cofactor, and $J=\det\bF\in\bbR^+$ its determinant, where $\text{GL}^+(3)$ is the general linear group in three dimensions. Applying the decomposition proposed by \textcite{flory}, the isochoric deformation gradient and its cofactor are given by $\Bar{\bF}=J^{-1/3}\bF\in\text{SL}(3)$ and $\Bar{\bH}=\cof\Bar{\bF}\in\text{SL}(3)$, where $\text{SL}(3)$ denotes the special linear group in three dimensions. We further introduce the isochoric isotropic invariants $\IoneISO=\norm{\Bar{\bF}}$ and $\ItwoISO=\norm{\Bar{\bH}}$, corresponding to the square roots of the typically isochoric invariants, $\bar{I}_1=\bar{K}_1^2$ and $\bar{I}_2=\bar{K}_2^2$, see also Remark \ref{remark:invar}.


In pseudo-elasticity, the first Piola-Kirchhoff stress is given by\footnote{For a derivation of this stress formula from a thermodynamic potential, the reader is referred to \textcite[Sec.~3.3]{Naumann2015} or \textcite[Prop.~2.1]{DESOUZANETO19941533}.}
\begin{equation}\label{eq:total_stress}
\bP=\eta \,\derivF\psiN -p \bH\,,
\end{equation}
where $p\in\bbR$ is a Lagrange multiplier ensuring incompressibility ($J=1$). Moreover, $\psiN$ is the virgin, unsoftened energy potential and $\eta$ is the softening function. We assume an incompressible body which is isotropic and remains so under softening. We furthermore assume a \textbf{parameterized} material behavior characterized by the parameter $\bs\in\bbR$. In the following, we introduce parameterized formulations for $\psiN$ and $\eta$ and discuss how they can be restricted so that the constitutive modeling framework satisfies relevant constitutive conditions.

\begin{remark} \label{remark:invar}
The use of $\IoneISO,\,\ItwoISO$ instead of the commonly employed invariants $\bar{I}_1,\,\bar{I}_2$ can provide a polyconvex constitutive model with more flexibility, cf.~\textcite[Sec.~3.2]{WOLLNER2027106814}. 
\end{remark}
\begin{remark} \label{remark:param}
For the examples in this work, we consider a scalar parameter $\bs\in\bbR$. However, both the theoretical framework and the PANN constitutive model could be extended to vector-valued parametrizations $\boldsymbol{s}\in\bbR^m$, cf.~\textcite{Klein_Roth_Valizadeh_Weeger_2023}. 
\end{remark}

\subsection{Energy potential}\label{sec:ener_pot}

We consider energy potentials of the form
\begin{equation}\label{eq:pot_invs}
\begin{aligned}
\psiN\left(\IoneISO(\bbF),\,\ItwoISO(\bbF),\,\bs\right)\,,
\end{aligned}
\end{equation}
with $\psiN\colon\bbR^+\times\bbR^+\times \bbR\rightarrow\bbR^+$, which depend on the deformation through the invariants $\IoneISO$ and $\ItwoISO$ introduced above, as well as on the parameter $\bs$. 
We assume that the energy potential is a \textbf{polyconvex} function in the sense of \textcite{Ball1976,Ball1977}. To ensure polyconvexity, we employ the sufficient (but not necessary) conditions that $\psiN$ is convex and component-wise monotonically increasing in $(\IoneISO,\,\ItwoISO)$, which, in the incompressible case, are polyconvex invariants, cf.~\textcite[Sec.~3.2]{WOLLNER2027106814} and \textcite[Sec.~2.3]{Klein2026a}. Then, the only \textbf{normalization condition} we need to impose explicitly on the energy potential is that it vanishes in the undeformed configuration, i.e., $\psiN\rvert_{\bF=\bI}=0$. As we assume $\psiN$ to be monotonically increasing in the invariants, and the invariants attain their global minimum in the undeformed configuration $\bF=\bI$ \parencite[Corr.~A.11]{Schroeder2003}, this also ensures the non-negativity of the energy, i.e., $\psiN\geq0$. Finally, since $\psiN$ depends on the isochoric isotropic invariants $(\IoneISO,\,\ItwoISO)$, stress normalization is fulfilled a priori, i.e., $\derivF\psiN\rvert_{\bF=\bI}=\bnull$ \parencite{sansour2018}.

\subsection{Softening function}\label{sec:soft}

The softening function depends on the deformation through $\psiN$ and on the deformation history through
\begin{equation}\label{eq:psiM}
  \psiM(t)=\max_{\tau\leq t}\psiN(\tau)\,,
\end{equation}
where, with a slight abuse of notation, $\psiN(\tau)$ denotes the value of the energy potential at time $\tau$. This means that $\psiM$ is the maximum value of the energy potential experienced up to the current time $t$. We further define
\begin{equation}\label{eq:Dpsi}
  \Dpsi=\psiM-\psiN\in\mathbb{R}^+\,,
\end{equation}
which measures the difference of the current value of the energy potential from its maximum value in the deformation history. Based on this, we consider softening functions of the form
\begin{equation}
  \eta\big(\psiN,\psiM,\bs\big)=1-\eta_1(\psiN,s)\,\eta_2(\Dpsi)\,,
\end{equation}
with $\eta_1\colon\bbR^+\times\bbR\rightarrow[0,1]$ and $\eta_2\colon\bbR^+\rightarrow[0,1]$, which implies $\eta\in[0,1]$. As will be demonstrated in the remainder of this work, this multiplicative decomposition into $\eta_1$ and $\eta_2$ facilitates the formulation of suitable softening functions.

As $\eta$ depends on the deformation only through $\psiN$ and $\psiM$, \textbf{thermodynamic consistency} can be fulfilled by the monotonicity constraint
\begin{equation}\label{eq:eta_thermo_cons}
  \partial_{\Dpsi}\eta_2 \geq 0 \quad\Longrightarrow\quad  \partial_{\psiM}\eta \leq 0\,,
\end{equation}
together with the assumed property $\eta_1\geq 0$, cf.~\textcite[Eq.~(37)]{Naumann2015}. The \textbf{normalization condition}
\begin{equation}\label{eq:eta_norm}
    \eta_2\big\rvert_{\Dpsi=0}=0   \quad\Longrightarrow  \quad\eta\big\rvert_{\psiN=\psiM}=1\,,
\end{equation}
ensures that the stress response is not softened whenever the energy potential attains its maximum value in the deformation history. As a consequence, $\psiN$ represents the \emph{unsoftened} material behavior \parencite[Sect.~3]{Naumann2015}.

Recall that the energy potentials employed in this work are polyconvex functions. Then, as we derive in~App.~\ref{sec:ellip}, a sufficient (but not necessary) condition for \textbf{ellipticity} of the overall constitutive model is the monotonicity constraint
\begin{equation}\label{eq:eta_ellip}
\partial_{\psiN}\eta_{1}\leq 0\quad\Longrightarrow\quad 
       \partial_{\psiN} \eta\geq 0\,,
\end{equation}
together with the assumed property $\eta_1,\eta_2\geq 0$. 

Finally, the energy potential $\psiN$ is an objective and isotropic function, and the softening function $\eta$ depends on the deformation only through $\psiN$ and on the deformation history only through $\psiM$. Thus, the stress $\bP$ in~\cref{eq:total_stress} satisfies \textbf{objectivity} and \textbf{material symmetry} in the form of isotropy a priori \parencite{Holzapfel2000}. The overall structure of the modeling framework is visualized in~\cref{fig:model_overview}. 

\begin{remark}
The sufficient condition for ellipticity introduced in~\cref{eq:eta_ellip}, i.e., $\partial_{\psiN}\eta_1\leq 0$, provides a physical motivation for imposing a corresponding monotonicity constraint on $\eta_1$. However, in contrast to thermodynamic consistency, ellipticity is not imposed as a mandatory constitutive requirement in the present work. Indeed, constitutive models that do not satisfy ellipticity by construction are commonly employed \parencite{Klein2026a}, although loss of ellipticity may have important implications for material stability and the well-posedness of boundary-value problems. We therefore also consider models for which the sufficient condition in~\cref{eq:eta_ellip} is not enforced by construction.
\end{remark}

\subsection{Parametric dependency}\label{sec:param}

Besides requiring the constitutive conditions discussed above to hold for every value of the parameter $\bs$, no physical considerations restrict how the energy potential or the softening function depend on $\bs$. To provide the energy potential with some structure in terms of its parametric dependency, we employ the heuristic monotonicity condition $\partial_{\bs}\psiN\geq0$, which has proven useful when applying hyperelastic constitutive models to experimental data of soft rubber-like materials \parencite{ABDOLAZIZI2025106212,Klein2026a}.

Likewise, the dependence of the softening function on the parameter $\bs$ can be restricted by imposing monotonicity constraints. Recall that $\bs$ enters the softening function through $\eta_1(\psiN,\bs)$. In this work, we consider three variants for $\eta_1$: (i) an unrestricted function, (ii) a monotonically increasing function with $\partial_{\psiN}\eta_1\geq0$ and $\partial_{\bs}\eta_1\geq0$, and (iii) a monotonically decreasing function with $\partial_{\psiN}\eta_1\leq0$ and $\partial_{\bs}\eta_1\leq0$. For variants (i) and (ii), the sufficient condition for ellipticity introduced in~\cref{eq:eta_ellip} is not satisfied by construction. Mixed monotonicity constraints, such as $\partial_{\psiN}\eta_1\geq0$ and $\partial_{\bs}\eta_1\leq0$, could also be imposed. However, the three considered variants already allow us to investigate the influence of monotonicity constraints on the predictive performance of the resulting models. As the present investigation is limited to a single experimental dataset, we refrain from introducing additional variants and leave a more comprehensive investigation of mixed monotonicity constraints to future work.

\begin{remark}
Throughout this work, to represent the energy potential, we adopt the physics-augmented neural network model proposed by \textcite[Sect.~4]{Klein_Roth_Valizadeh_Weeger_2023}. This allows us to focus the remainder of this work on the softening function. In particular, we employ
\begin{equation}\label{eq:NN_energy}
    \begin{aligned}     \bx^{(1)}&=\TANH\big(\bw^{(1)}\,\bs+\bb^{(1)}\big)\quad &&\in\bbR^n
        \\
        \bx^{(2)}&=\SP\big(\bW^{(2)}\cdot(\IoneISO-\sqrt{3},
        \\
       &\phantom{=\SP\big(} \ItwoISO-\sqrt{3},\,\bx^{(1)})^T+\bb^{(2)}\big)\quad &&\in\bbR^n
        \\
        \psiNnnt&=\bw^{(3)}\cdot\bx^{(2)}\quad &&\in\bbR\,,
        \\
            \psiNnn&=\psiNnnt-\psiNnnt\big\rvert_{\bF=\bI}\quad &&\in\bbR\,,
    \end{aligned}
\end{equation}
where $\TANH$ denotes the hyperbolic tangent function and $\SP$ denotes the softplus function. With the restriction $w^{(1)}_i,\,W^{(2)}_{ij},\,w^{(3)}_i\geq 0$, this energy potential is convex and monotonically increasing in polyconvex invariants and thus polyconvex, and monotonically increasing in $\bs$. Further, it satisfies all normalization constraints on $\psiN$ introduced above. The construction of such feed-forward neural networks incorporating convexity and monotonicity constraints was proposed by \textcite{Amos2017}, and its use to construct polyconvex neural energy potentials was introduced by \textcite{klein2022a}.
\end{remark}

\section{Formulation of neural softening functions}\label{sec:PANN}

We now discuss different formulations of neural softening functions. We start by introducing possible neural network (NN) architectures in~\cref{sec:NN_ansatz}. The corresponding neural softening functions are introduced in~\cref{sec:neural_softening_models}.

\begin{figure}[t!]
    \centering        
    \resizebox{0.69\linewidth}{!}{
    \includegraphics{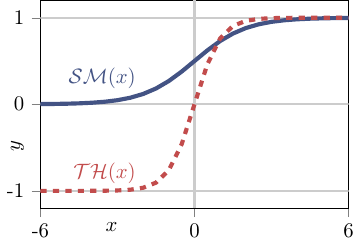}}
 \caption{The hyperbolic tangent function $\TANH$ is monotonically increasing, and, for $x \geq 0$, takes values between 0 and 1. The sigmoid function $\SM$ is monotonically increasing and takes values between 0 and 1.}
        \label{fig:act_fcts}
\end{figure}

\subsection{Neural network architectures}\label{sec:NN_ansatz}

We consider scalar-valued feed-forward neural networks (FFNNs) given as the mapping $\cF:\bbR^{n_0}\rightarrow \bbR\,,\,\, \bx^{(0)}\mapsto \cF(\bx^{(0)})$, see e.g.~\parencite{herrmann2025} for a general introduction to NNs. In particular, we consider single-layered FFNNs given as\footnote{Note that generally, FFNNs with an arbitrary amount of layers could be employed to represent neural softening functions.}
\begin{equation}\label{eq:HL1}
    \cF=\sigma\big(\bw^{(1)}\cdot \bx^{(0)}+b^{(1)}\big) \in\bbR\,,
\end{equation}
and two-layered FFNNs given as
\begin{equation}\label{eq:HL2}
        \begin{aligned}
\bx^{(1)}&=\sigma\big(\bW^{(1)}\cdot \bx^{(0)}+\bb^{(1)}\big) &&\in\bbR^{n_1}
\\
\cF&=\sigma\big(\bw^{(2)}\cdot\bx^{(1)}+b^{(2)}\big) &&\in\bbR\,.
\end{aligned}
\end{equation}
Here, $\sigma\in\{\SM,\,\TANH\}$ denotes the component-wise applied activation function, for which we employ either the hyperbolic tangent function $\TANH$ or the sigmoid function $\SM$, both visualized in~\cref{fig:act_fcts}. The trainable parameters $\boldsymbol{\mathcal{P}}$ of~\cref{eq:HL1,eq:HL2} are the weights $\bW^{(i)}$, $\bw^{(i)}$ and biases $\bb^{(i)}$. 

In~\cref{sec:param,sec:soft}, we discussed different types of mathematical restrictions required for the construction of the softening function. These restrictions can be incorporated into~\cref{eq:HL1,eq:HL2} as follows:
\begin{itemize}[label=\textbf{--}]
    \setlength{\itemsep}{0pt}
\item \textbf{Zero normalization:} $\cF\rvert_{\bx^{(0)}=\bnull}=0 $ is fulfilled if the $\TANH$ activation function is employed and all biases take zero-values. This is because $\TANH(0)=0$. Note that the zero normalization condition cannot be fulfilled if the $\SM$ activation function is employed, since $\SM(x)>0$ for $x\in\mathbb{R}$.
\item \textbf{Boundedness between zero and one:} $\cF\in[0,1]$ is always fulfilled if the $\SM$ activation function is employed. If the $\TANH$ activation is employed, it is fulfilled if all weights, biases, and the inputs $x^{(0)}_i$ are non-negative. This immediately follows from the values that both functions take, cf.~\cref{fig:act_fcts}. 
\item \textbf{Monotonicity:} $\partial_{x_i^{(0)}}\cF\geq 0$ can be fulfilled for both the $\TANH$ and $\SM$ activation function if all weights are non-negative, cf.~\textcite[Sec.~3.1]{Klein2026a} for a proof thereof. Based on such a monotonically \emph{increasing} function $\cF$, a monotonically \emph{decreasing} function can be constructed as $1-\cF$.
\end{itemize}
For the construction of neural softening functions, we employ different combinations of these constraints.


\subsection{Neural softening functions}\label{sec:neural_softening_models}

Recall that the overall softening function is given by $\eta=1-\eta_1(\psiN,\,\bs)\eta_2(\Dpsi)$, with the following constraints on $\eta_1$ and $\eta_2$ as introduced in~\cref{sec:soft,sec:param}. A sufficient condition for thermodynamic consistency is that $\eta_1\in[0,1]$ and $\partial_{\Dpsi}\eta_2\geq 0$. A sufficient condition for ellipticity is that $\eta_2\in[0,1]$ and $\partial_{\psiN}\eta_1\leq 0$. We further require $\eta_1,\,\eta_2\in[0,1]$ to satisfy our assumption that $\eta\in[0,1]$, and furthermore assume that $\eta_2(0)=0$ should hold. Based on the NN architectures introduced in~\cref{sec:NN_ansatz}, we now introduce different formulations of softening functions to be investigated in this work. 

\begin{figure}[t!]
    \centering        
    \resizebox{0.85\linewidth}{!}{
    \includegraphics{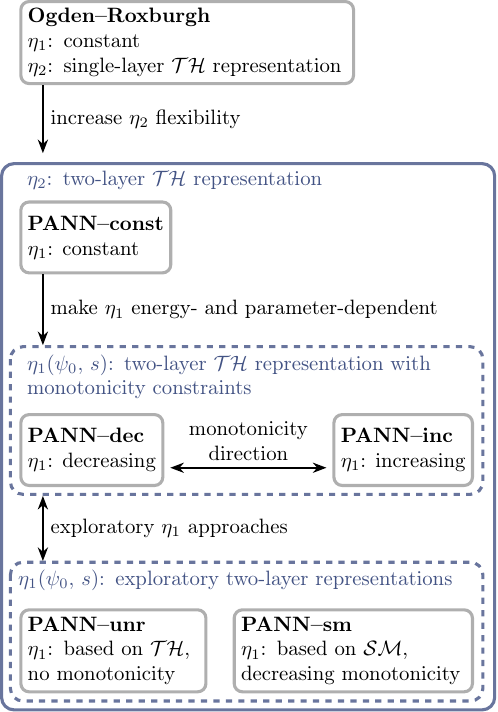}}
 \caption{Conceptual development of the softening model formulations investigated in this work.}
        \label{fig:flowchart}
\end{figure}

\medskip

As a classical reference, we employ the \textbf{Ogden--Roxburgh} model
\begin{equation}\label{eq:OR}
    \eta=1-\frac{1}{r}\TANH\left(\frac{\Dpsi}{m}
    \right) \,,
\end{equation}
with the material parameters $r\geq 1,\,m>0$, cf.~\textcite[Eq.~(3.27)]{ogden_roxburgh}.\footnote{Note that the original model by Ogden~and~Roxburgh employs the error function instead of the hyperbolic tangent, but in the later work by~\textcite[Eq.~(57)]{DORFMANN20041855}, the softening function in~\cref{eq:OR} with the $\TANH$ function was employed.} In the language of machine learning, this model applies a single-layered NN with the $\TANH$ activation function to represent $\eta_2$. Thereby, the parameters $w^{(1)}=\frac{1}{m}\geq 0$ and $b^{(1)}=0$ are constrained to ensure $\eta_2\in[0,1]$, $\partial_{\Dpsi}\eta_2\geq 0$ and $\eta_2(0)=0$. Furthermore, $\eta_1=\frac{1}{r}\in(0,1]$ takes a constant value. 

As discussed in \textcite[Sec.~4.2]{WOLLNER2027106814}, the distinction between classical and NN-based constitutive models is somewhat artificial, as ``[t]hey mainly deviate in the choice of ansatz function made for the constitutive model equations''. The softening functions considered in this work provide an illustrative example of this observation: the classical Ogden--Roxburgh model can itself be interpreted as a single-layered NN ansatz with $\TANH$ activation function. From a technical perspective, NNs are mainly a systematic way of increasing the flexibility of the ansatz function, with which we now proceed. 

\medskip

As a first extension, the \textbf{PANN--const} model retains the constant representation of $\eta_1$, but replaces the single-layered representation of $\eta_2$ by the more flexible two-layered NN with the $\TANH$ activation function, where the parameters are constrained to ensure $\eta_2\in[0,1]$, $\partial_{\Dpsi}\eta_2\geq 0$ and $\eta_2(0)=0$. For all remaining models, we employ this two-layered NN ansatz for $\eta_2$.

\begin{figure}[t!]
    \centering        
    \resizebox{0.87\linewidth}{!}{
    \includegraphics{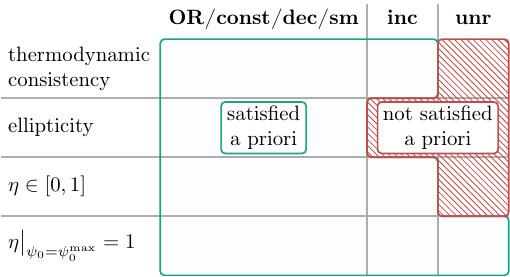}}
 \caption{Fulfillment of constitutive conditions of the softening models investigated in this work, using a polyconvex energy potential. Model labels omit the prefix PANN--; OR denotes the Ogden--Roxburgh model.}
        \label{fig:const_cond}
\end{figure}

\medskip

Next, we introduce an energy- and parameter-dependent representation $\eta_1(\psiN,s)$. For this, the \textbf{PANN--dec} and \textbf{PANN--inc} models employ the same two-layered NN ansatz with the $\TANH$ activation function. In both cases, we employ no zero normalization but boundedness between zero and one. Here, the boundedness between zero and one only holds if $\psiN\geq 0$, which is fulfilled by the employed energy potential, and $\bs\geq 0$, which is satisfied by the employed parameter normalization. The PANN--dec and PANN--inc employ a decreasing and increasing $\eta_1$, respectively.

\medskip

Finally, we introduce two exploratory representations for $\eta_1$, which both employ a two-layered NN ansatz. The \textbf{PANN--unr} model is based on the $\TANH$ function and uses no weight and bias constraints. Consequently, $\eta_1$ does not include the boundedness between zero and one or monotonicity by construction. This model  still satisfies the zero normalization condition in~\cref{eq:eta_norm} through the choice of $\eta_2$. Moreover, the \textbf{PANN--sm} model is based on the $\SM$ function and thus aims to investigate the influence of the employed activation function. Based on $\SM$, this model automatically satisfies the condition of boundedness between zero and one. Moreover, it is restricted to be monotonically decreasing.

\medskip

An overview of the conceptual development of the model formulations and the constitutive conditions they satisfy a priori is provided in~\cref{fig:flowchart,fig:const_cond}. Notably, the Ogden--Roxburgh model and the PANN--const/dec/sm models satisfy all considered constitutive conditions a priori. The PANN--inc model satisfies all conditions besides ellipticity a priori, and the PANN--unr model only satisfies the zero-normalization condition a priori.

\section{Application to experimental data}\label{sec:application}

We now apply the PANN constitutive models discussed in~\cref{sec:PANN} to experimental data of soft polymers. In~\cref{sec:data}, we introduce the experimental datasets. The considered models and calibration strategy are presented in~\cref{sec:calibration}. Subsequently, in~\cref{sec:results_exp}, we investigate the model performance on experimental data.

\subsection{Considered data}\label{sec:data}

We consider experimental data of carbon black-filled ethylene propylene diene rubber materials (EPDM) from~\textcite{Plagge_Klüppel_2017}. EPDM can be synthesized using different fillers and curatives, which results in different properties of the cured material. We consider an EPDM filled with 50~phr N339 carbon black and cured with sulfur and the accelerators CBS and DPG \parencite[Sec.~3.1, Tab.~1]{Plagge_Klüppel_2017}. The amount of curatives is varied, where the CBS content ranges from 0.4 to 2.4~phr, and the sulfur and DPG contents are scaled proportionally \parencite[Sec.~4.1, Fig.~10]{Plagge_Klüppel_2017}. We refer to these compounds as EPDM--$\square$, where $\square\in\{0.4,\,0.9,\,1.4,\,1.9,\,2.4\}$ denotes the CBS content in phr. As the scalar parameter of the PANN models, we employ the linearly scaled curative content $\bs=(\square-0.4)/2.0\in[0,1]$, which parametrizes the considered variation in curative composition.

In~\textcite{Plagge_Klüppel_2017}, uniaxial tension, pure shear, and equibiaxial tension tests are conducted as multihysteresis experiments, i.e., the strain amplitude is successively increased, and each amplitude is applied several times, mostly three to five \parencite[Sec.~4.1, Fig.~9]{Plagge_Klüppel_2017}. As the pseudo-elastic modeling framework that we employ cannot represent changes of the material response in subsequent cycles with the same amplitude, we only consider the first cycle for each new maximum strain amplitude. For each parameter value and load scenario, 1000 datapoints are sampled equidistantly. Apart from this selection, the data is used without further pre-processing. We extract the primary loading path as the envelope of the loading segments up to each new maximum, and the subsequent unloading and reloading paths. Overall, this results in a dataset of the form
\begin{equation}\label{eq:data}
\begin{aligned}
    \cD=\Big\{
    &\big({^1}\lambda,\,{^1}\lambda^{\max},\,{^1}P,\,{^1}s\big),\,\dotsc,
    \\
    &\big({^m}\lambda,\,{^m}\lambda^{\max},\,{^m}P,\,{^m}s\big)
    \Big\}\,,
\end{aligned}
\end{equation}
consisting of $m$ data points. Here, $\lambda$ and $P$ denote the stretch and the first Piola--Kirchhoff stress in tensile direction, and $\lambda^{\max}$ denotes the maximum stretch attained in the preceding deformation history of the respective load path. The data is split into calibration, validation, and test datasets as summarized in~\cref{tab:split}. Following standard machine learning practice \parencite{herrmann2025}, the calibration data is used to fit the model parameters, the validation data is used solely for selection of models after training, and the test data is used only for the final model evaluation. As pure shear lies between uniaxial and equibiaxial tension in the invariant plane, the pure shear test cases correspond to an interpolation in the deformation space \parencite{Klein2026a}.

\begin{table}[t]
\centering
\caption{Split of the experimental data into calibration (C), validation (V), and test (T) datasets.}
\label{tab:split}
\begin{tabular}{lccccc}
\toprule
& \multicolumn{5}{c}{EPDM} \\
\cmidrule(lr){2-6}
& 0.4 & 0.9 & 1.4 & 1.9 & 2.4 \\
\midrule
uniaxial tension    & C & T & C & V & C\\
pure shear          & T & T & T & T & T\\
equibiaxial tension & C & V & C & T & C\\
\bottomrule
\end{tabular}
\end{table}

\begin{remark}\label{rem:data_discussion}
While the pseudo-elastic constitutive models employed in this work allow for a fairly accurate representation of the considered experimental data, our modeling framework is still a simplification. As discussed in~\textcite{Plagge_Klüppel_2017}, the considered EPDM exhibits further effects such as rate- and temperature-dependency, and softening during repeated cycles with a fixed $\psiM$. Such effects cannot be represented within the pseudo-elastic modeling framework we consider, which clearly constitutes a limitation of our approach. Nevertheless, by reducing the model complexity, we aim to gain insights into appropriate formulations of neural softening functions. Ultimately, such investigations on simplified constitutive models provide the foundation for developing more sophisticated modeling approaches including further relevant effects.
\end{remark}

\subsection{Considered models and calibration strategy}\label{sec:calibration}

We now provide information on the considered pseudo-elastic constitutive models and the model calibration strategy. 

\paragraph{Considered models.} 
For all pseudo-elastic constitutive models, the virgin energy potential $\psiN$ is represented by the physics-augmented neural network introduced in~\cref{eq:NN_energy}. Following~\textcite[Sec.~4.2]{Klein2026a}, we employ $n=8$ nodes in each hidden layer, resulting in 112 trainable parameters. This energy potential is combined with the different softening functions $\eta_1,\,\eta_2$ introduced in~\cref{sec:neural_softening_models}. Based on preliminary hyperparameter studies, all neural softening models likewise employ $n=8$ nodes in each hidden layer, cf.~\cref{sec:NN_ansatz}. For the softening function, this results in 2 trainable parameters for the Ogden--Roxburgh model, 17 for PANN--const, and 49 for PANN--inc, PANN--dec, PANN--unr, and PANN--sm. This pragmatic hyperparameter choice for the neural softening function is supported by the observation that once a NN architecture is sufficiently flexible for the considered material behavior, there is no benefit in further increasing the parameter count~\parencite[Sec.~4.2]{Klein2026a}.

\paragraph{Stress response.} We need to evaluate the stress response of the pseudo-elastic modeling framework introduced in~\cref{sec:basics} for the load scenarios employed in the experimental investigations, cf.~\cref{sec:data}. Assuming incompressibility, the deformation gradient for these load scenarios is given by
\begin{equation}\label{eq:def_grads}
\begin{aligned}
\bF_{\text{uniaxial}}&=\operatorname{diag}(\lambda,\,\lambda^{-1/2},\,\lambda^{-1/2})\,,
\\
 \bF_{\text{equibiaxial}}&=\operatorname{diag}(\lambda,\,\lambda,\,\lambda^{-2})\,,
 \\
 \bF_{\text{pure shear}}&=\operatorname{diag}(\lambda,\,1,\,\lambda^{-1})\,.
\end{aligned}
\end{equation}
The corresponding stress tensor $\bP$ is likewise diagonal, where the 11-component corresponds to the stress in tensile direction, and the 33-component -- and, for uniaxial tension, also the 22-component -- vanishes. These vanishing stress components enable to explicitly calculate the Lagrange multiplier $p$ in~\cref{eq:total_stress}, and, consequently, to calculate the stress in tensile direction as
\begin{equation}\label{eq:model_stress_prediction}
    P^{\text{model}}=\eta\,\big(\partial_{\lambda_1}\psiN-\lambda_3/\lambda_1\,\partial_{\lambda_3}\psiN\big)
\end{equation}
where the $\lambda_i$ are chosen according to~\cref{eq:def_grads}, see e.g.~\textcite{Steinmann_Hossain_Possart_2012}.

\medskip

\paragraph{Two-stage calibration.}
As commonly done in pseudo-elasticity, see e.g.~\textcite{ogden_roxburgh}, all models are calibrated in two stages, where the virgin energy potential $\psiN$ is calibrated first and subsequently kept fixed while calibrating the softening function. In both cases, we minimize the mean squared error (MSE) loss function
\begin{equation}\label{eq:loss}
\begin{aligned}
        \mathfrak{L}_{\star}(\boldsymbol{\mathcal{P}})=\frac{1}{m_{\star}}\sum_{i=1}^{m_{\star}}\left(^iP-{^iP}^{\text{model}}({^i\lambda},\,{^i}\lambda^{\max},\,{^i}\bs;\,\boldsymbol{\mathcal{P}})\right)^2\,,
\end{aligned}
\end{equation} 
on a dataset $\cD_{\star}$ of size $m_{\star}$. Furthermore, $P^\text{model}$ is calculated according to~\cref{eq:model_stress_prediction} and $\boldsymbol{\mathcal{P}}$ denotes the trainable parameters of the NN. We employ no loss weighting. 

In the first stage, the dataset used to evaluate the loss function in~\cref{eq:loss} consists only of the primary loading paths of the overall calibration dataset, i.e., $\lambda=\lambda^{\max}$. Then, the normalization condition in~\cref{eq:eta_norm} yields $\eta=1$, such that the stress in~\cref{eq:model_stress_prediction} depends only on the energy potential. The energy potential is calibrated five times with randomly initialized parameters, and the model instance with the lowest validation loss is used in all subsequent investigations. Consequently, the energy potential is identical for all softening models, and differences between the models stem from the softening function alone.

In the second calibration stage, the parameters of the energy potential are fixed, and the overall calibration dataset is used to evaluate the loss function in~\cref{eq:loss}, now including also unloading and reloading data points. To assess the robustness with respect to parameter initialization, each softening model is calibrated 100 times from randomly initialized parameters. Unless stated otherwise, the instance with the lowest validation loss is used for evaluation.  

\begin{remark}
Alternatively, in the second calibration stage, the softening function could be determined directly from the calibrated energy potential and the experimental data, enabling a loss formulation in terms of the softening function itself. However, this approach requires division by the stress predicted by the energy potential, leading to numerical instabilities at small stress values. It was therefore not considered further.
\end{remark}

\paragraph{Implementation.}
All models are implemented in \textsc{KLAX} 0.2 (see~\url{https://drenderer.github.io/klax}) and calibrated using the SLSQP optimizer with a maximum of 2000 iterations. A single calibration of a neural energy potential or a neural softening function takes a few seconds on a MacBook Air M4 with 16 GB of RAM.

\subsection{Model performance on experimental data}\label{sec:results_exp}

\begin{figure}[t]
\centering
\includegraphics[width=0.99\columnwidth]{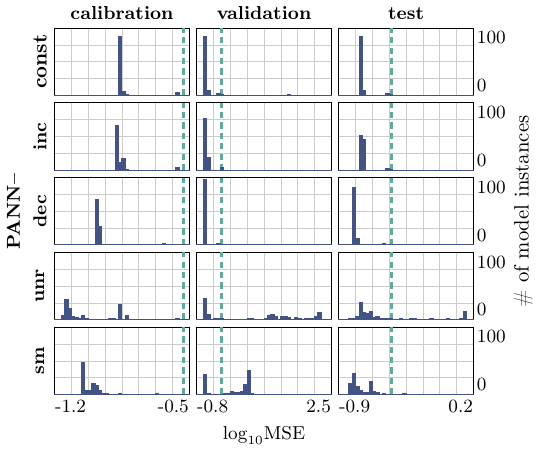}
 \caption{Distribution of the MSE values for 100 repeated calibrations of each softening model. The energy potential is identical in all cases, and only the softening function is varied. The Ogden--Roxburgh model showed practically no variation between different calibration runs; for comparison, it is indicated by a dashed vertical line.}
\label{fig:histogram}
\end{figure}

\begin{figure*}[t]
\centering
    \includegraphics[width=0.95\textwidth]{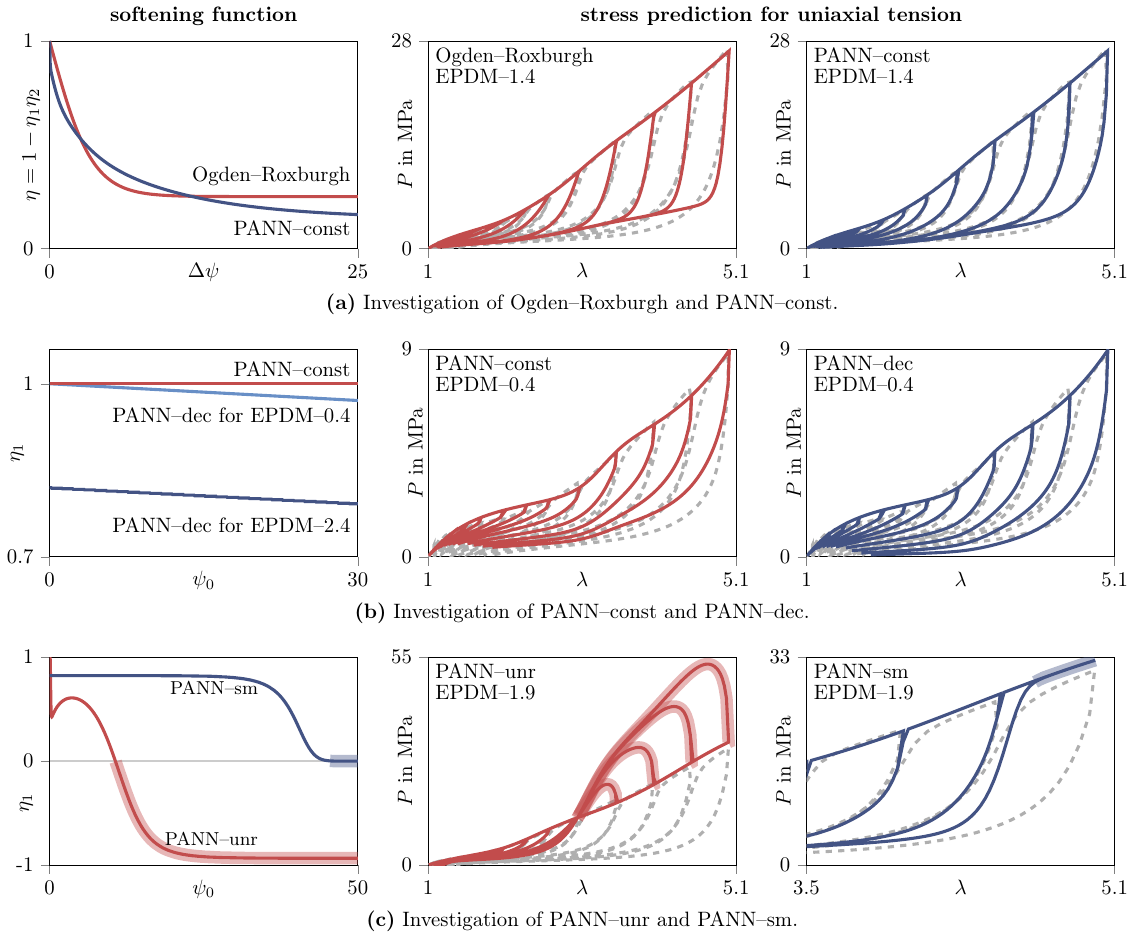}
 \caption{Comparison of different softening function representations and their implications on stress prediction for uniaxial tension. Dashed lines denote experimental data and continuous lines model predictions.
 }
\label{fig:comparison_figure}
\end{figure*}

In the first calibration stage, the neural energy potential is calibrated to the primary loading paths of the calibration dataset, for which the employed model shows a very good performance. As a nearly identical PANN model was already applied to the same data in~\textcite[Sec.~4.3]{Klein2026a}, we refer to this work for a detailed discussion. In the second calibration stage, each softening model is calibrated 100 times with randomly initialized parameters, while the energy potential is kept fixed. The distributions of the resulting MSE values on the calibration, validation, and test datasets are visualized in~\cref{fig:histogram}. Overall, the PANN--const, PANN--inc, and PANN--dec models show narrow MSE distributions and clearly outperform the Ogden--Roxburgh model on all datasets. In contrast, the exploratory PANN--unr and PANN--sm models show a wide spread of MSE values, in particular on the validation and test datasets. In the following, we discuss the different models in the order of their conceptual development, cf.~\cref{fig:flowchart}.

\paragraph{Increase of $\eta_2$-flexibility.} 
To investigate the influence of increasing the flexibility of $\eta_2$, we compare the Ogden--Roxburgh model and the PANN--const model, with the MSE values visualized in~\cref{fig:histogram} and exemplary softening function and stress predictions for EPDM--1.4 visualized in~\cref{fig:comparison_figure}a. 
The Ogden--Roxburgh model shows the largest MSE values on all datasets. Still, its stress predictions are reasonably accurate, although they show visible deviations from the experimental data, in particular for the softening unloading and reloading paths. By replacing the single-layered representation of $\eta_2$ with a two-layered NN, the PANN--const model considerably reduces the MSE values, which is also reflected in an improved stress prediction quality. The reason for this can be seen in the calibrated softening functions. While the softening function of the Ogden--Roxburgh model is restricted to a single $\TANH$ function, which saturates rapidly, the PANN--const model can represent more general relationships, such as a steep initial decrease for small values of $\Dpsi$ combined with a continued slow decrease for larger values. Notably, both models perform remarkably well, considering that their softening functions do not depend on the parameter $\bs$. However, this does not imply that the softening behavior of the material is independent of $\bs$. As the softening function is evaluated on $\Dpsi$, which is measured in terms of the parametrized energy potential $\psiN(\,\cdot\,,\bs)$, the softening in terms of the stress response still depends on the parameter. Rather, the results suggest that, when expressed in terms of the energy measure $\Dpsi$, the softening behavior of the considered EPDM depends only weakly on the curative content.

\begin{figure}[t]
\centering
\includegraphics[width=0.97\columnwidth]{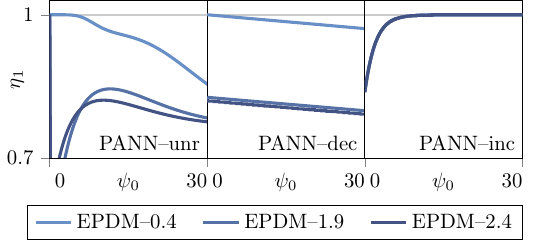}
 \caption{Investigation of different $\eta_1$ representations evaluated on multiple parameter values.}
\label{fig:eta1_comparison}
\end{figure}

\begin{figure*}[t!]
\centering
    \includegraphics[width=0.9\textwidth]{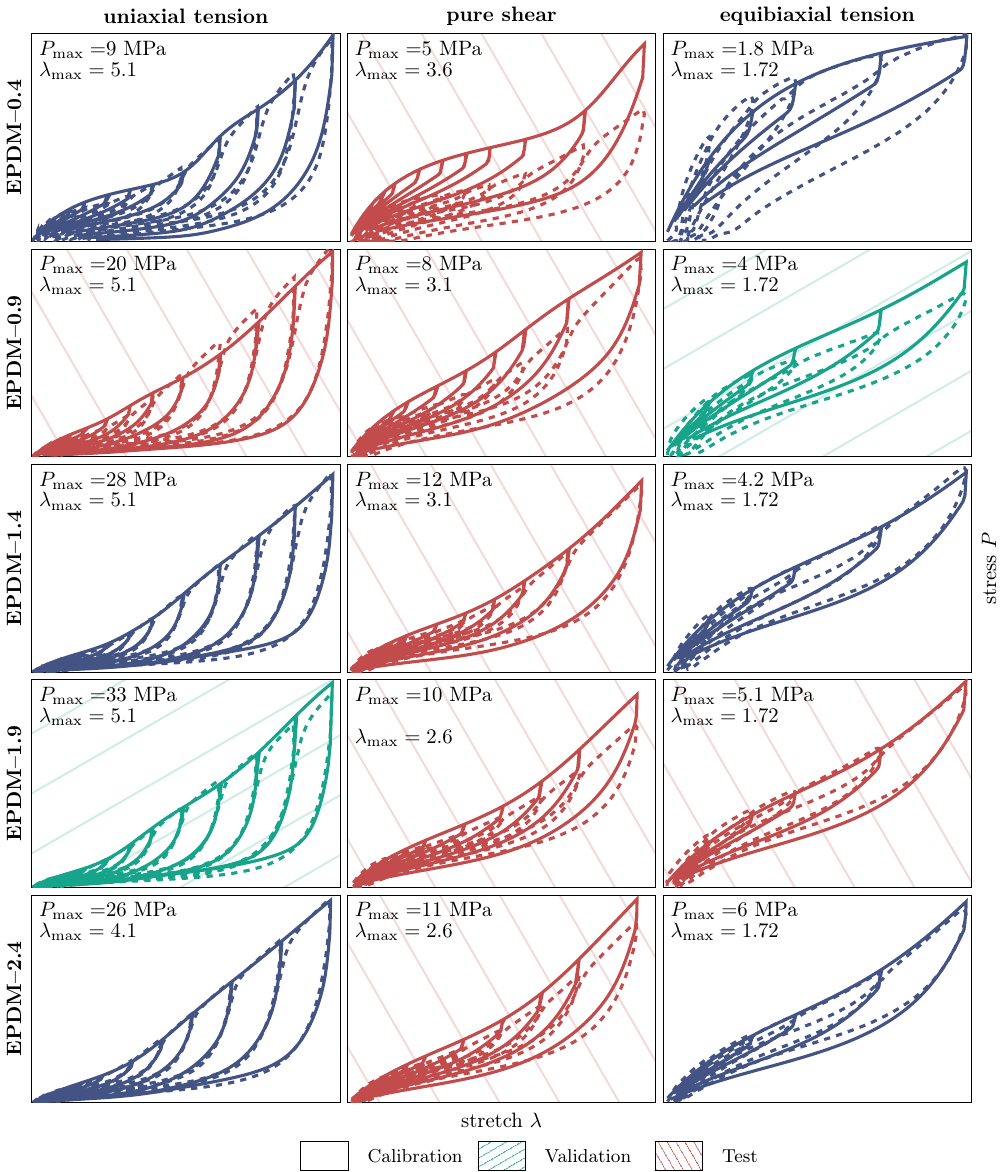}
 \caption{Stress predictions of the PANN--dec model for all load cases and parameter values. Dashed lines denote experimental data and continuous lines model predictions. The background patterns and colors denote calibration, validation, and test data, cf.~\cref{tab:split}.}
\label{fig:PANN_dec_overall}
\end{figure*}

\paragraph{Energy- and parameter-dependent $\eta_1$.} 
Next, we consider the PANN--inc and PANN--dec models, in which $\eta_1$ depends on both the energy potential and the parameter. The MSE values are visualized in~\cref{fig:histogram} and exemplary softening function and stress predictions of the PANN--dec model for EPDM--0.4 are visualized in~\cref{fig:comparison_figure}b. 
While the PANN--inc model performs approximately as well as the PANN--const model, the PANN--dec model slightly improves the performance on the calibration dataset, which is reflected in an improved stress prediction for some parameter values. The calibrated $\eta_1$ functions show that both models exploit their additional flexibility in different ways, cf.~\cref{fig:eta1_comparison}. The PANN--dec model mainly exploits its parametric dependency, with $\eta_1\approx 0.97$ for EPDM--0.4 and $\eta_1\approx 0.82$ for EPDM--1.9 and EPDM--2.4, and only a weak decrease with $\psiN$. In contrast, the PANN--inc model learns a function which is practically independent of $\bs$. As the employed NN ansatz functions couple the monotonicity in $\bs$ to the monotonicity in $\psiN$, cf.~\cref{sec:param}, the PANN--inc model cannot represent a softening which decreases with the curative content. Instead, it exploits its dependency on the energy potential: $\eta_1$ increases from approximately $0.85$ at $\psiN=0$ MPa to $\eta_1\approx 1$ for $\psiN\geq 5$ MPa and remains constant for larger energies. Thus, the PANN--inc model reduces the softening for small energy values, while it behaves similarly to the PANN--const model for larger energies. 

At first glance, it might appear surprising that models with opposite monotonicity in $\psiN$ perform similarly. The PANN--unr model, which is not restricted in its monotonicity, provides an explanation, cf.~\cref{fig:eta1_comparison}. For EPDM--1.9 and EPDM--2.4, its $\eta_1$ increases with $\psiN$ for small energy values, similar to the PANN--inc model, and decreases for larger energy values, similar to the PANN--dec model. Furthermore, it shows a pronounced parametric dependency, again similar to the PANN--dec model. This suggests that the softening behavior in the data comprises different features in different energy ranges, which cannot be represented by a single monotonicity direction. Consequently, the PANN--inc and the PANN--dec model each capture only part of these features, which explains their similar overall performance. Although the PANN--unr model can represent both features simultaneously, it does so without any guarantee of a physically reasonable behavior, as we discuss next.

\paragraph{Exploratory models.} Finally, we discuss the PANN--unr and PANN--sm models, for which the MSE values are visualized in~\cref{fig:histogram} and exemplary stress predictions for EPDM--1.9 in~\cref{fig:comparison_figure}c. Note that in~\cref{fig:comparison_figure}c, we do not show the model instances with lowest validation MSE, but rather models which exemplify typical failure modes. The PANN--unr model shows a wide spread of MSE values. While some instances achieve lower calibration MSE values than the constrained models, only about a third of the instances yield physically reasonable predictions, which is indicated by low validation MSE values. The remaining instances show unphysical behavior when evaluated outside the calibration data, which is caused by a lack of structure for extrapolation. In particular, these instances predict negative values of $\eta_1$ for energy values beyond the calibration regime, which leads to an increase of the stress upon unloading. Thus, an unrestricted NN architecture can, but does not reliably, yield a physically reasonable softening function.

The PANN--sm model fulfills the same constitutive conditions as the PANN--dec model and mainly differs in the choice of activation function. Nevertheless, many of its instances fail for EPDM--1.9, which belongs to the validation dataset, and show a rapid collapse of the softening function outside the calibration regime. This can be explained by the regime in which the activation functions are evaluated, see~\cref{fig:act_fcts} for a visualization of the activation functions. In the $\TANH$-based NNs, the non-negative weights, biases, and inputs restrict each node to the concave, saturating regime of the activation function, leading to a conservative extrapolation behavior. In contrast, in the $\SM$-based NNs, the biases are unconstrained, as boundedness is fulfilled for arbitrary biases. Thus, the inflection point of the activation function can be shifted to just outside the calibration regime, leading to the sharp decrease of $\eta_1$ observed in~\cref{fig:comparison_figure}c.

\begin{remark}\label{rem:ansatz_shape}
At first glance, the choice between the $\TANH$ and $\SM$ activation function might appear to be of minor relevance, as both NN architectures fulfill the same constitutive conditions introduced in~\cref{sec:basics}. However, the results above demonstrate that fulfilling the same constraints is not sufficient. In particular, the specific shape of the ansatz function and the regime in which it is evaluated have a pronounced influence on the model behavior, in our case particularly outside the calibration regime. In classical constitutive modeling, it is standard to choose ansatz functions based on their specific shape. Clearly, PANN models do not remove the need for well-chosen ansatz functions but merely shift this choice to the level of the NN architecture.
\end{remark}

\paragraph{Overall performance of PANN--dec.} Finally, \cref{fig:PANN_dec_overall} shows the stress predictions of the PANN--dec model for all load cases and parameter values. For uniaxial tension, the model accurately represents both the primary loading paths and the unloading and reloading paths, including the validation and test cases EPDM--1.9 and EPDM--0.9, respectively, which correspond to an interpolation in the parameter space. Likewise, the pure shear predictions, which are test data for all parameter values and correspond to an interpolation in the deformation space, show good agreement with the experimental data. For larger curative contents, the prediction of equibiaxial tension is also excellent. The most pronounced deviations occur for unloading or reloading of EPDM--0.4 under equibiaxial tension. A possible explanation is that, as the softening function depends on the deformation only through the energy potential, equal values of $\psiN$ and $\psiM$ result in equal softening regardless of the load case, which may limit the representation of load case-dependent softening. Overall, the PANN--dec model, which satisfies all considered constitutive conditions a priori, including the sufficient condition for ellipticity, provides accurate predictions over a wide range of deformation states and parameter values.


\section{Conclusion}\label{sec:conc}

In this work, we investigated parametrized physics-augmented neural network constitutive models for the Mullins effect within the pseudo-elastic modeling framework, focusing on structural constraints on neural softening functions. We introduced a multiplicative decomposition of the softening function that enables the incorporation of parametric dependencies while satisfying common constitutive conditions by construction. Furthermore, we derived a condition for ellipticity of pseudo-elastic models, which can be incorporated by sufficient (but not necessary) conditions such as polyconvexity of the energy potential and a monotonicity constraint on the neural softening function. We systematically investigated neural softening functions with varying flexibility and constraints using multiaxial experimental data of carbon black-filled EPDM rubber with varying curative content from \textcite{Plagge_Klüppel_2017}. The main findings are as follows:
\begin{itemize}[label=\textbf{--}]
    \setlength{\itemsep}{0pt}
    \item Neural softening functions are a natural extension of the classical pseudo-elastic modeling framework, improving the representation of materials that experience Mullins softening while retaining common constitutive conditions. Moreover, neural softening functions offer the potential for a unified modeling framework that reduces the need to select an appropriate classical softening function for a given material \parencite{ricker2021}. Similar advantages of neural constitutive models over classical approaches have been demonstrated for hyperelasticity in~\parencite{Klein2026a}.
    \item Structural constraints on the softening function are essential for reliable predictions. While unconstrained models can achieve low calibration errors, they frequently exhibit physically unreasonable predictions. In contrast, constrained formulations exclude such behavior by construction and exhibit narrow error distributions across repeated calibrations. Similar observations have been reported for hyperelastic PANN models in~\parencite{linden2023,Klein2026a}.
    \item Even if two constitutive models are similar in that they satisfy the same constitutive conditions, they can have a vastly different performance. We demonstrate this at the example of softening functions based on the hyperbolic tangent function and based on the sigmoid function, respectively. Similar observations have been reported for hyperelastic PANN models in~\parencite{WOLLNER2027106814,KLEIN2027106815,Geuken2026}.
\end{itemize}
Future work should address additional aspects of the Mullins effect, including temperature dependence \parencite{Plagge_Klüppel_2017}, recovery over time \parencite{LIAO2023104338}, softening-induced anisotropy \parencite{DENORA2024108399}, and coupling with other inelastic phenomena such as viscoelasticity \parencite{ALKHOURY2024105650}. Furthermore, the proposed parametrized PANN formulations should be investigated using additional experimental datasets to gain further insights on how neural softening functions should be formulated for different material behaviors.


\vspace*{3ex}

\noindent
\textbf{CRediT authorship contribution statement.} 
\textbf{D.K.\ Klein:} Conceptualization, Methodology, Software, Formal analysis, Investigation, Writing -- original draft, Writing -- review and editing, Visualization. 
\textbf{J.\ Fey:} Conceptualization, Methodology, Software, Investigation, Writing -- review and editing. 
\textbf{M.\ Kannapinn:} Software, Investigation, Writing -- review and editing. 
\textbf{O. Weeger:} Conceptualization, Funding acquisition, Writing -- review and editing. 
\vspace*{1ex}

\noindent
\textbf{Conflict of interest.} The authors declare that they have no conflict of interest.
\vspace*{1ex}

\noindent
\textbf{Acknowledgment.} 
D.K.\ Klein and O.\ Weeger acknowledge the financial support provided by the Deutsche Forschungsgemeinschaft (DFG, German Research Foundation, project number 492770117) and the Graduate School of Computational Engineering at TU Darmstadt.
\vspace*{1ex}

\noindent
\textbf{Data availability.}
The authors have no data to share.
\vspace*{1ex}

\noindent
\textbf{AI declaration.} During the preparation of this work, the authors used Claude (Anthropic) and ChatGPT (OpenAI) to support language editing, improve clarity and readability, and assist with code development and debugging. All scientific content, methodological choices, numerical results, and conclusions were developed, verified, and approved by the authors. The authors take full responsibility for the content of the manuscript.


\appendix
\numberwithin{equation}{section} 

\section{Ellipticity in pseudo-elastic Mullins modeling}\label{sec:ellip}

We now derive the ellipticity condition for the incompressible pseudo-elastic Mullins softening framework introduced in~\cref{sec:basics}. For convenience of notation, we omit the dependency of the material model on the parameter $\bs$ in the following, and note that ellipticity should hold for arbitrary values of $\bs$. 

Recall that the first Piola-Kirchhoff stress of this framework is given by
\begin{equation}\label{eq:total_stress_recall}
\bP=\eta \,\derivF\psiN -p \bH\,,
\end{equation}
where $\psiN$ is the unsoftened energy potential and $\eta$ the softening function. In the following, we start by assuming a general softening function that depends on $\psiN$ and $\psiM$, cf.~\cref{eq:psiM}, and subsequently consider the case of a multiplicative decomposition of the softening function as
\begin{equation}\label{eq:psi_12_recall}
    \eta=1-\eta_1(\psiN)\,\eta_2(\Dpsi)
\end{equation}
where $\Dpsi=\psiM-\psiN$, cf.~\cref{eq:Dpsi}.

We begin by differentiating between different loading paths:
\begin{definition}
    We refer to a loading path as \emph{unloading} or \emph{reloading} if the material is in its elastic regime. This means that either $\psiN<\psiM$, or that $\psiN=\psiM$ with the incremental change of $\psiM$ along the loading path satisfying $\mathrm{D}\psiM=0$. We refer to a loading path as \emph{primary} if softening occurs, i.e., $\psiN=\psiM$ with the incremental change of $\psiM$ along the loading path satisfying $\mathrm{D}\psiM=\mathrm{D}\psiN>0$. 
\end{definition}
Due to the normalization condition~\cref{eq:eta_norm}, it holds that $\eta\leq 1$ in the unloading or reloading regime and $\eta=1$ on primary load paths. 

This distinction is required to calculate the consistent tangent modulus, see e.g.\ \textcite[Rem.~4.1]{DESOUZANETO19941533}:
\begin{theorem}\label{th:consistent_tang}
For the pseudo-elastic modeling framework introduced in~\cref{sec:basics}, the consistent tangent modulus
\begin{equation}
    \bbA=\derivF\big(\eta\,\derivF\psiN\big)
\end{equation}
is given by
\begin{equation}
\bbA=
\eta\,\derivFF\psiN+\big(\partial_{\psiN}\eta\big)\,\derivF\psiN\otimes\derivF\psiN\,,
\end{equation}
for unloading or reloading paths, and by
\begin{equation}
\bbA=\derivFF\psiN \,,
\end{equation}
for primary load paths.
\end{theorem}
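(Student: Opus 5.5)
The plan is to apply the product and chain rules to $\eta\,\derivF\psiN$, treating $\eta=\eta(\psiN,\psiM)$ as a function of the two scalars $\psiN$ and $\psiM$. Only $\psiM$ depends on the deformation history. The incremental (consistent) tangent is the directional derivative along an admissible increment $\delta\bF$. Linearizing gives
\begin{equation*}
\mathrm{D}\big(\eta\,\derivF\psiN\big)[\delta\bF]=\eta\,\derivFF\psiN[\delta\bF]+\big(\partial_{\psiN}\eta\,\mathrm{D}\psiN+\partial_{\psiM}\eta\,\mathrm{D}\psiM\big)\derivF\psiN\,,
\end{equation*}
with $\mathrm{D}\psiN=\derivF\psiN:\delta\bF$. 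What remains is to evaluate $\mathrm{D}\psiM$ and the values of $\eta$ and its derivatives in the two regimes of the preceding Definition.

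\emph{Unloading or reloading.} Here the Definition gives $\mathrm{D}\psiM=0$ directly. When $\psiN<\psiM$, continuity of $\psiN$ along the path keeps the maximum in \cref{eq:psiM} locally constant, so $\mathrm{D}\psiM=0$ holds automatically. The $\partial_{\psiM}\eta$ term therefore drops out, and $\mathrm{D}\psiN\,\derivF\psiN=(\derivF\psiN\otimes\derivF\psiN)[\delta\bF]$. This yields $\bbA=\eta\,\derivFF\psiN+(\partial_{\psiN}\eta)\,\derivF\psiN\otimes\derivF\psiN$.

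\emph{Primary loading.} Here $\psiN=\psiM$ and $\mathrm{D}\psiM=\mathrm{D}\psiN$. Then $\eta=1$ by the normalization \cref{eq:eta_norm}, and the rank-one term carries the coefficient $\partial_{\psiN}\eta+\partial_{\psiM}\eta$. This coefficient is the derivative of $\eta$ along the diagonal $\psiN=\psiM$. Since $\eta\equiv1$ on that diagonal by \cref{eq:eta_norm}, the derivative vanishes. Equivalently, for the decomposition \cref{eq:psi_12_recall}, $\eta$ depends on $\psiN$ and $\psiM$ through $\eta_1(\psiN)$ and $\eta_2(\psiM-\psiN)$. Differentiating along the path gives $-\eta_1'\,\eta_2\,\mathrm{D}\psiN-\eta_1\,\eta_2'\,(\mathrm{D}\psiM-\mathrm{D}\psiN)$. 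The first term vanishes because $\eta_2(0)=0$, and the second because $\mathrm{D}\psiM=\mathrm{D}\psiN$. Hence $\bbA=\derivFF\psiN$.

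The main obstacle is rigor at the switching point and differentiability. $\psiM$ is a history-dependent, nonsmooth functional, so the "tangent" has to be read as a one-sided directional derivative, and the two formulas are the branches selected by the sign of $\mathrm{D}\psiN$ at $\psiN=\psiM$. I would state this explicitly. The formulas also need differentiability of $\eta$ in $\psiN$ and $\Dpsi$ (one-sided at $\Dpsi=0$), which holds for the tanh-based ansatz functions used in the paper. The Lagrange multiplier term $-p\bH$ is excluded from $\bbA$ by the definition in the statement, so it needs no treatment here.
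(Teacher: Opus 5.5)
Your proposal is correct and follows essentially the paper's route: product and chain rules with $\mathrm{D}\psiM=0$ for unloading/reloading, and on primary paths the vanishing of $\partial_{\psiN}\eta+\partial_{\psiM}\eta$ because $\eta\equiv 1$ there, together with $\eta=1$ multiplying $\derivFF\psiN$. The paper reads off the vanishing coefficient from $\mathrm{D}\eta=0$ with $\mathrm{D}\psiN>0$, whereas you use constancy of $\eta$ along the diagonal, which is the same observation; your $\eta_2(0)=0$ check and your remark on one-sided derivatives at the switching point are refinements rather than a different argument.
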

\begin{proof}
For unloading or reloading, $\psiM$ is constant, and $\bbA$ follows immediately from the product and chain rules. For primary load paths, $\psiM$ is not constant, and we have
\begin{equation}\label{eq:tangent_damage}
\bbA=\big(\partial_{\psiN}\eta+\partial_{\psiM}\eta\big)\,\derivF\psiN\otimes\derivF\psiN+\eta\,\big(\derivFF\psiN\big)\,.
\end{equation}
As we assume that $\eta=1=\text{const.}$ on primary load paths, cf.~\cref{eq:eta_norm}, the increment of $\eta$ satisfies
\begin{equation}
\begin{aligned}
\mathrm{D}\eta&=\partial_{\psiN}\eta\,D\psiN+\partial_{\psiM}\eta\,D\psiM
\\
&=\big(\partial_{\psiN}\eta+\partial_{\psiM}\eta\big)\,D\psiN=0\,.    
\end{aligned}
\end{equation}
This implies that $\partial_{\psiN}\eta+\partial_{\psiM}\eta=0$ for primary load paths. Together with~\cref{eq:tangent_damage}, this provides $\bbA$ for primary load paths.
\end{proof}

Based on this, we can derive the ellipticity condition for pseudo-elasticity:

\begin{theorem}\label{th:ellip}
The ellipticity condition for the considered pseudo-elastic modeling framework is given by
\begin{equation}\label{eq:ellip}
\begin{aligned}
\ab:\bbA:\ab    \geq 0 \\
\forall\ba,\bb\in\bbR^3\,,\bF+\ba\otimes\bb\in\operatorname{SL}(3)\,,    
\end{aligned}
\end{equation}
with the consistent tangent modulus $\bbA$ for unloading / reloading paths and primary load paths as derived in~Theorem~\ref{th:consistent_tang}. Physically, \cref{eq:ellip} corresponds to the existence of real-valued wave speeds.
\end{theorem}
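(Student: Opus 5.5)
We will derive \cref{eq:ellip} from the linearized equations of motion about a current equilibrium state, following the classical characterization of ellipticity through plane waves, e.g.\ \textcite{Zee1983,Schroeder_Neff_Balzani_2005}.

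\textbf{Incremental problem.} Starting from the balance of linear momentum $\operatorname{Div}\bP=\rho_0\ddot{\bx}$ with $\bP$ as in \cref{eq:total_stress_recall}, we consider a small incremental displacement $\bu$ superposed on the current state. The increment of the stress is then $\mathrm{D}\bP=\bbA:\nabla\bu-\mathrm{D}p\,\bH-p\,\mathrm{D}\bH$. Here $\bbA$ is taken from Theorem~\ref{th:consistent_tang}, and we must use the branch matching the incremental path: the unloading/reloading branch if $\psiN<\psiM$ or $\mathrm{D}\psiN\leq0$, and the primary branch otherwise. The linearized incompressibility constraint $\mathrm{D}J=\bH:\nabla\bu=0$ must hold. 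For a plane wave $\bu=\ba\,f(\bb\cdot\bx-ct)$ with unit direction $\bb$, this constraint reads $\bH:(\ba\otimes\bb)=0$, i.e.\ $\det(\bF+\ba\otimes\bb)=\det\bF$ to first order. Since the determinant is affine along rank-one lines, this is exactly $\bF+\ba\otimes\bb\in\operatorname{SL}(3)$ (after scaling $\ba$). This is how the admissible set in \cref{eq:ellip} arises.

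\textbf{Propagation condition.} We insert the plane wave ansatz into the incremental equations and contract with $\ba$. The pressure term $\mathrm{D}p\,\bH\cdot\bb$ drops out because $\ba\cdot\bH\bb=0$. For the geometric term $p\,\mathrm{D}\bH$, we plan to absorb it into the acoustic form; it gives a contribution proportional to $\cof$-type rank-one terms that vanish on admissible pairs, using $\cof(\ba\otimes\bb)=\bnull$. The result is the scalar relation $\rho_0c^2\,|\ba|^2=\ab:\bbA:\ab$. The wave speed $c$ is therefore real for every admissible amplitude–direction pair if and only if the right-hand side is nonnegative, which is \cref{eq:ellip}. Strong ellipticity would require strict inequality.

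\textbf{Path dependence and main obstacle.} The part we expect to be delicate is the nonsmooth, path-dependent tangent. At $\psiN=\psiM$, the direction $\ba\otimes\bb$ itself determines which branch applies, according to the sign of $\derivF\psiN:\ab$. Since a wave perturbation oscillates in sign, we will argue that both branches are traversed, so nonnegativity must be required for each branch of Theorem~\ref{th:consistent_tang} separately. The statement phrases the condition in exactly this way. We will justify this by treating the loading/unloading cases as distinct linear comparison solids (Hill's approach) rather than attempting a genuinely nonlinear wave analysis.
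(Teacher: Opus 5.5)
Your proposal is correct and follows essentially the same route as the paper. Like the paper, you linearize the stress about an equilibrium state, insert a plane-wave ansatz into the linearized momentum balance, and reuse the incompressible-hyperelastic argument with $\bbA$ replaced by the branch-wise consistent tangent from Theorem~\ref{th:consistent_tang}.

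You also spell out steps the paper delegates to the literature. These are how $\bF+\ba\otimes\bb\in\operatorname{SL}(3)$ arises from $\bH:(\ba\otimes\bb)=0$, why the $\mathrm{D}p$ and $p\,\mathrm{D}\bH$ terms drop out after contracting with $\ba$, and why each loading branch must be checked separately. A small correction: the $p\,\mathrm{D}\bH$ contribution vanishes for all pairs $\ba,\bb$, not only for admissible ones.
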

\begin{proof}
Consider the homogeneous equilibrium state $\bxeq=\bxn+\bueq$, where $\bFeq$ and $\peq$ designate the associated equilibrium deformation gradient and pressure fields with $\det\bFeq=1$. The linearization of the stress at the equilibrium state is given by
\begin{equation}
\begin{aligned}
        \operatorname{Lin}\bP=&{\bPeq}
    +\bbAeq:\Delta\bF
    \\
&-\Delta p\bHeq - \peq\, \Delta \bF \Cross \bFeq\,,
\end{aligned}
\end{equation}
with the stress at the equilibrium state
\begin{equation}
    \bPeq=\eta \,\derivF\psiN\big\rvert_{\bF=\bFeq} -\peq \bHeq\,,
\end{equation}
and the consistent tangent modulus 
\begin{equation}
    \bbAeq=\bbA\rvert_{\bF=\bFeq}\,,
\end{equation}
where $\bbA$ depends on whether unloading / reloading or the softening regime is considered, cf.~Theorem~\ref{th:consistent_tang}. For the remaining calculations, the procedure known for the derivation of ellipticity in incompressible hyperelasticity can be employed, see e.g.\ \textcite{Zee1983}, the only difference being that the expression for the tangent operator $\bbA$ differs between hyperelasticity and pseudo-elastic Mullins modeling. The remaining steps can be outlined as follows, see e.g.\ \textcite[App.~A]{Klein2026a}:
\begin{enumerate}
    \item In the absence of body forces, the linearized balance of linear momentum is given by 
    \begin{equation}
        \operatorname{Div}[\bbAeq:\Delta\bF-\Delta p\,\bHeq]=\rho_0\Delta\ddot{\bu}\,.
    \end{equation}
    \item Next, we seek solutions in the form of traveling waves
    \begin{equation}
        \begin{aligned}
            \Delta\bu=\Bar{\bu}\,f(\bx_0\cdot\bn_0-ct)\,,
        \\
        \Delta p=\Bar{p}\,f(\bx_0\cdot\bn_0-ct)\,,
        \end{aligned}
    \end{equation}
    where the unit vector $\bn_0\in\bbR^3$ denotes the direction of wave propagation, $\Bar{\bu}\in\bbR^3\backslash\{\bnull\}$ and $\Bar{p}\in\bbR$ denote the eigenmodes of the deformation, $c\in\bbR$ is the wave speed, and $f$ a sufficiently smooth function.
    \item Then, the wave ansatz is inserted into the linearized balance of linear momentum. After some algebraic manipulations, this provides the ellipticity condition in~\cref{eq:ellip}, cf.~\parencite[App.~A]{Klein2026a}.
\end{enumerate}
\end{proof}
\begin{theorem}\label{th:ellip_specific}
Sufficient (but not necessary) conditions for the considered pseudo-elastic modeling framework to be elliptic are that
\begin{enumerate}[label=(\roman*)]
    \setlength{\itemsep}{0pt}
\item the energy potential $\psiN$ is elliptic, e.g., by using a polyconvex $\psiN$, 
\item the softening function satisfies $\eta\geq 0$ and the monotonicity constraint $\partial_{\psiN}\eta \geq 0$.
\end{enumerate}
\end{theorem}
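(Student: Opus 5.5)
The plan is to verify the inequality in~\cref{eq:ellip} separately for the two tangent moduli given by Theorem~\ref{th:consistent_tang}. Throughout, fix $\ba,\bb\in\bbR^3$ with $\bF+\ba\otimes\bb\in\operatorname{SL}(3)$. By Theorem~\ref{th:ellip}, this is the only condition to check. By assumption (i), the energy potential is elliptic, i.e., $\ab:\derivFF\psiN:\ab\geq 0$ for all such $\ba,\bb$. If $\psiN$ is polyconvex, this follows from the standard implication polyconvexity $\Rightarrow$ rank-one convexity $\Rightarrow$ Legendre--Hadamard ellipticity. In the incompressible setting, this holds for rank-one perturbations compatible with the constraint, as in~\textcite{Zee1983} and \textcite[App.~A]{Klein2026a}. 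I take this as the known hyperelastic ingredient.

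\emph{Primary load paths.} By Theorem~\ref{th:consistent_tang}, the tangent is $\bbA=\derivFF\psiN$. Hence $\ab:\bbA:\ab=\ab:\derivFF\psiN:\ab\geq 0$ directly by (i). This case needs nothing from the softening function, because the normalization~\cref{eq:eta_norm} together with the consistency argument in Theorem~\ref{th:consistent_tang} has already removed the rank-one correction.

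\emph{Unloading or reloading paths.} Here $\bbA=\eta\,\derivFF\psiN+(\partial_{\psiN}\eta)\,\derivF\psiN\otimes\derivF\psiN$. Contracting with $\ab$ on both sides gives
\begin{equation*}
\ab:\bbA:\ab=\eta\,\big(\ab:\derivFF\psiN:\ab\big)+\partial_{\psiN}\eta\,\big(\derivF\psiN:\ab\big)^2\,.
\end{equation*}
The first term is a product of $\eta\geq 0$ (by (ii)) and a quantity that is non-negative by (i). The second term is a product of $\partial_{\psiN}\eta\geq 0$ (by (ii)) and a square. So the sum is non-negative, and~\cref{eq:ellip} holds on both kinds of path. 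This completes the argument. Non-necessity is clear, since a negative $\partial_{\psiN}\eta$ could be compensated by a strictly positive first term.

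Finally, I will connect this to the decomposition~\cref{eq:psi_12_recall} to justify~\cref{eq:eta_ellip}. Since $\Dpsi=\psiM-\psiN$ with $\psiM$ fixed during unloading or reloading, the chain rule gives $\partial_{\psiN}\eta=-(\partial_{\psiN}\eta_1)\,\eta_2+\eta_1\,\partial_{\Dpsi}\eta_2$. The second term is $\geq 0$ by $\eta_1\geq0$ and the thermodynamic constraint~\cref{eq:eta_thermo_cons}. The first is $\geq0$ when $\partial_{\psiN}\eta_1\leq 0$ and $\eta_2\geq 0$. Also $\eta\geq0$ follows from $\eta_1,\eta_2\in[0,1]$.

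I expect the main subtlety to be in (i) rather than the algebra. One has to make sure that ellipticity of $\psiN$ is meant in the incompressible, constrained sense, which is exactly the class of $\ab$ appearing in~\cref{eq:ellip}. One also has to check that the pressure terms in the proof of Theorem~\ref{th:ellip} contribute identically in the hyperelastic and pseudo-elastic cases, so that only the $\bbA$ term changes. Since Theorem~\ref{th:ellip} already isolates the condition in terms of $\bbA$ alone, I would simply cite that and the polyconvexity-implies-ellipticity result.
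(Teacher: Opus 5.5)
Your proof is correct and follows essentially the same route as the paper. You split by the two tangent moduli of Theorem~\ref{th:consistent_tang} and use (i) directly on primary paths. On unloading/reloading paths you write the contraction as $\eta\,(\ab:\derivFF\psiN:\ab)+\partial_{\psiN}\eta\,(\derivF\psiN:\ab)^2\geq 0$. Your last paragraph, on the multiplicative decomposition, reproduces the argument of Corollary~\ref{corr:ellip_multi} with the correct sign coming from $\partial_{\psiN}\Dpsi=-1$.
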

\begin{proof}
We start by investigating ellipticity for unloading or reloading paths, for which we have to show that
\begin{equation}\label{eq:ellip_special_1}
\begin{aligned}
\ab:\Big[&\eta\,\big(\derivFF\psiN\big)
\\
&+\big(\partial_{\psiN}\eta\big)\derivF\psiN\otimes\derivF\psiN\Big]:\ab
\\
&\geq 0\quad \forall\ba,\bb\in\bbR^3\,,\bF+\ba\otimes\bb\in\operatorname{SL}(3)\,.
\end{aligned}
\end{equation}
Under the assumption (ii), sufficient conditions for \cref{eq:ellip_special_1} are given by
\begin{equation}\label{eq:ellip_elastic}
\begin{aligned}
\ab:\derivFF\psiN:\ab&\geq 0 
\\
\text{and}\qquad \ab:\Big[\derivF\psiN\otimes\derivF\psiN\Big]:\ab&\geq 0 
\\
\forall\ba,\bb\in\bbR^3\,,\bF+\ba\otimes\bb\in\operatorname{SL}(3)\,.
\end{aligned}
\end{equation}
The condition in~\cref{eq:ellip_elastic}$_1$ is fulfilled by assumption (i). The condition in~\cref{eq:ellip_elastic}$_2$ can be expressed as $\left(\derivF\psiN:\ab\right)^2\geq 0$ and is also fulfilled. Thus, the overall material model is elliptic for unloading or reloading paths. 

On primary load paths, the ellipticity condition is given by
\begin{equation}\label{eq:ellip_damage}
\begin{aligned}
\ab:\big(\derivFF\psiN\big):\ab\geq 0
\\
\forall\ba,\bb\in\bbR^3\,,\bF+\ba\otimes\bb\in\operatorname{SL}(3)\,,
\end{aligned}
\end{equation}
which is fulfilled under assumption (i). Thus, the material model is elliptic for both unloading / reloading and primary load paths.
\end{proof}
Theorem~\ref{th:ellip_specific} can immediately be applied for softening functions with a multiplicative decomposition.
\begin{corollary}\label{corr:ellip_multi}
Let us consider a pseudo-elastic constitutive model with a multiplicatively decomposed softening function of the form $\eta=1-\eta_1(\psiN)\,\eta_2(\Dpsi)$. Sufficient (but not necessary) conditions for such a model to be elliptic are that 
\begin{itemize}[label=\textbf{--}]
    \setlength{\itemsep}{0pt}
\item $\psiN$ is a polyconvex function.
\item $\eta_1$ satisfies $\partial_{\psiN}\eta_1\leq 0$ and $\eta_1\in[0,1]$.
\item $\eta_2$ satisfies $\partial_{\Dpsi}\eta_2\geq 0$ and $\eta_2\in[0,1]$.
\end{itemize}
\end{corollary}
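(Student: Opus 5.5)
The plan is to reduce the corollary to Theorem~\ref{th:ellip_specific} by checking its two hypotheses for $\eta=1-\eta_1(\psiN)\,\eta_2(\Dpsi)$.

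\textbf{Hypothesis (i).} By assumption $\psiN$ is polyconvex. In Theorem~\ref{th:ellip_specific}(i) polyconvexity is already listed as a way to obtain ellipticity of $\psiN$, namely $\ab:\derivFF\psiN:\ab\geq 0$ for the admissible rank-one directions. So (i) holds with nothing further to show.

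\textbf{Hypothesis (ii), sign of $\eta$.} Since $\eta_1,\eta_2\in[0,1]$, the product $\eta_1\eta_2$ lies in $[0,1]$. Hence $\eta=1-\eta_1\eta_2\in[0,1]$, and in particular $\eta\geq 0$.

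\textbf{Hypothesis (ii), monotonicity.} This is the main step, because $\eta$ depends on $\psiN$ in two ways: directly through $\eta_1(\psiN)$ and indirectly through $\Dpsi=\psiM-\psiN$. In Theorem~\ref{th:consistent_tang}, $\partial_{\psiN}\eta$ enters the tangent on unloading/reloading paths with $\psiM$ held fixed, so I differentiate in that sense, using $\partial_{\psiN}\Dpsi=-1$:
\begin{equation*}
\partial_{\psiN}\eta=-\big(\partial_{\psiN}\eta_1\big)\,\eta_2+\eta_1\,\big(\partial_{\Dpsi}\eta_2\big)\,.
\end{equation*}
The first term is non-negative, since $\partial_{\psiN}\eta_1\leq 0$ and $\eta_2\geq 0$. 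The second term is non-negative, since $\eta_1\geq 0$ and $\partial_{\Dpsi}\eta_2\geq 0$. Therefore $\partial_{\psiN}\eta\geq 0$.

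Note that the monotonicity of $\eta_2$, which is otherwise the thermodynamic-consistency condition~\cref{eq:eta_thermo_cons}, is needed here as well. It controls the indirect dependence of $\eta$ on $\psiN$ through $\Dpsi$.

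\textbf{Primary paths.} On primary load paths the tangent reduces to $\derivFF\psiN$ by Theorem~\ref{th:consistent_tang}, and Theorem~\ref{th:ellip_specific} covers this case through (i) alone. No further condition on $\eta$ is required there.

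With (i) and (ii) verified, Theorem~\ref{th:ellip_specific} yields ellipticity of the pseudo-elastic model.
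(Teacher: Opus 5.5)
Your proposal is correct and follows the paper's proof. You verify hypothesis (ii) of Theorem~\ref{th:ellip_specific} via $\eta_1\eta_2\leq 1\Rightarrow\eta\geq 0$ and via the chain rule with $\partial_{\psiN}\Dpsi=-1$, which gives $\partial_{\psiN}\eta=-(\partial_{\psiN}\eta_1)\,\eta_2+\eta_1\,\partial_{\Dpsi}\eta_2\geq 0$, and you take (i) directly from polyconvexity, exactly as the paper does.
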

\begin{proof}
This follows immediately from Theorem~\ref{th:ellip_specific}. To show that assumption (ii) from Theorem~\ref{th:ellip_specific} holds, we observe that
\begin{equation}
  \eta_1\eta_2\leq 1\quad \Rightarrow\quad  \eta=1-\eta_1\eta_2\geq 0\,,
\end{equation}
and
\begin{equation}
    \partial_{\psiN}\eta=-\big(\partial_{\psiN}\eta_1\big)\,\eta_2-\eta_1\,\big(\deriv_{\psiN}\eta_2\big)\geq 0\,,
\end{equation}
where the latter holds because $\eta_1,\eta_2\geq 0$, $\partial_{\psiN}\eta_1\leq 0$, and $\deriv_{\psiN}\eta_2=\big(\partial_{\Dpsi}\eta_2\big)\,\big(\partial_{\psiN}\Dpsi\big)=
-\partial_{\Dpsi}\eta_2\leq 0$.
\end{proof}
\begin{remark}
    As a consequence of Corollary~\ref{corr:ellip_multi}, several of the PANN models introduced throughout this manuscript are elliptic by construction, including the PANN--const, PANN--dec, and PANN--sm models. Likewise, many classical approaches such as the model proposed by \textcite{ogden_roxburgh} are elliptic, given that they employ a polyconvex energy potential.
\end{remark}

\renewcommand*{\bibfont}{\footnotesize}
\printbibliography

\end{document}